\newtheorem{thm}{Theorem}
\newtheorem{defi}[thm]{Definition}
\newtheorem{lemma}[thm]{Lemma}
\newtheorem{conj}{Conjecture}
\newtheorem{result}{Result}
\newtheorem{feat}{Property}
\newcommand{\ket}[1]{| #1 \rangle}
\newcommand{\bra}[1]{\langle #1 |}
\newcommand{\braket}[2]{\langle #1 | #2 \rangle }
\newcommand{\proj}[1]{\ket{#1}\bra{#1}}
\newcommand{\kb}[1]{\ket{#1}\bra{#1}}
\newcommand{\PR}{\ensuremath{\mathbb{P}}}
\newcommand{\Tr}{\textnormal{Tr}}
\newcommand{\W}{\mathbf{W}}
\newcommand{\sets}{\mathcal{S}}
\newcommand{\setg}{\mathcal{G}}
\newcommand{\norm}[1]{\left\| #1 \right\|}
\newcommand{\abs}[1]{\left| #1 \right|}
\newcommand{\eps}{\varepsilon}
\title{
On-state commutativity of measurements \\
and joint distributions of their outcomes
}
\author[1]{Jan Czajkowski\thanks{j.czajkowski@uva.nl}}
\author[2]{Alex B. Grilo\thanks{Alex.Bredariol-Grilo@lip6.fr}}
\affil[1]{QuSoft, University of Amsterdam}
\affil[2]{Sorbonne Universit\'{e}, CNRS, LIP6}
\date{}
\begin{document}
\maketitle

\begin{abstract}
In this note, we analyze joint probability distributions that arise from outcomes of sequences of quantum measurements performed on sets of quantum states. 
First, we identify some properties of these distributions that need to be fulfilled to get a classical behavior. Secondly, we prove that a joint distribution exists iff measurement operators ``on-state'' permute (permutability is the commutativity of more than two operators).  
By ``on-state'' we mean properties of operators that hold only on a subset of states in the Hilbert space. 
Then, we disprove a conjecture proposed by 
 Carstens, Ebrahimi, Tabia, and Unruh (eprint 2018), which states that the property of partial on-state permutation implies full on-state permutation.
We disprove this conjecture with a counterexample where pairwise ``on-state'' commutativity does not imply on-state permutability, unlike in the case of commutativity for all states in the Hilbert space.

Finally, we explore the new concept of on-state commutativity by showing a simple proof that if two projections almost on-state commute, then there is a commuting pair of operators that are on-state close to the originals. This result was originally proven by Hastings (Communications in Mathematical Physics, 2019) for general operators.

\end{abstract}
\clearpage
\section{Introduction}

In this work we propose a basic formalism for studying classical distributions that come from joint measurements on quantum states.

Our initial motivation comes from studying a conjecture proposed in a recent paper by Carstens, Ebrahimi, Tabia, and Unruh~\cite{carstens2018quantum}. Their result on quantum indifferentiability\footnote{Indifferentiability is a strong security notion capturing security of cryptographic constructions such as hash functions, where we require that any polynomial-time adversary cannot distinguish if it has access to a cryptographic hash function or an ideal random function, even if she has access to some internal auxiliary functions used to construct the hash function. The quantum version assumes the adversary makes quantum queries.}
relies on a conjecture proposed by them, which informally states that commutation of projectors with respect to a fixed quantum state implies a classical joint distribution of their measurement outcomes. More concretely, they conjecture the following.\footnote{See Conjecture~\ref{formal-conjecture-notmod} for the formal statement.}

\begin{conj}[Informal]\label{conj-informal}
If we have a set of $N$ measurements $P_{1},\dots,P_{N}$ that commute on a quantum state  $\ket{\psi},$\footnote{Informally, two operators $A$ and $B$  commute on $\ket{\psi}$ if $[A,B]\ket{\psi} = 0$.} then there exist random variables $X_1,\dots,X_N$ drawn from a distribution $D$ such that for any $ t>1 $, $ \forall _1,\dots, i_t $, the marginals of this distribution on $X_{i_1},\dots,X_{i_t}$ correspond to measuring $\ket{\psi}$ with measurements $P_{i_1},\dots,P_{i_t}$.
\end{conj}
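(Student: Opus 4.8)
The plan is to build the joint distribution $D$ by a single sequential measurement of all $N$ projectors in the canonical order and then argue that its marginals reproduce every sub-sequence. Treating each $P_i$ as a projective measurement with outcome-$x_i$ projector $P_i^{x_i}$ (so $P_i^1=P_i$, $P_i^0=I-P_i$ in the binary case), I would define
\[
D(x_1,\dots,x_N)=\norm{P_N^{x_N}\cdots P_1^{x_1}\ket{\psi}}^2 .
\]
Because $\{P_i^0,P_i^1\}$ is complete for each $i$, summing over any coordinate is automatic, so $D$ is a bona fide probability distribution and summing out the \emph{last} measured coordinate is free. The entire content of the conjecture is therefore the consistency of the marginals: the law of $(X_{i_1},\dots,X_{i_t})$ under $D$ must agree with the statistics of measuring $P_{i_1},\dots,P_{i_t}$ in that (or any) order.

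To reduce consistency to something checkable, I would first establish permutation invariance of the full-order statistics, i.e.\
\[
\norm{P_N^{x_N}\cdots P_1^{x_1}\ket{\psi}}^2=\norm{P_{\sigma(N)}^{x_{\sigma(N)}}\cdots P_{\sigma(1)}^{x_{\sigma(1)}}\ket{\psi}}^2
\]
for every permutation $\sigma$. Since every $\sigma$ factors into adjacent transpositions, it suffices to interchange two neighbouring projectors inside the product, which reduces to comparing $P_i^{a}P_j^{b}$ and $P_j^{b}P_i^{a}$ sitting on top of some already-applied string of projectors. Once permutation invariance holds, deleting the un-queried coordinates by repeated summation at the end of the order recovers exactly the sub-sequence statistics, and the marginal consistency follows.

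The hard part will be the neighbour swap. The hypothesis gives $[P_i,P_j]\ket{\psi}=0$, which licenses the exchange only when the pair acts \emph{directly} on $\ket{\psi}$. To move the swap past an intervening projector $P_k$ I would want to induct on the number of operators between the pair and $\ket{\psi}$, but this requires $[P_i,P_j]$ to annihilate the different vector $P_k\ket{\psi}$ -- equivalently, on-state commutativity on a whole subspace rather than at the single point $\ket{\psi}$. The assumption says nothing about $[P_i,P_j]P_k\ket{\psi}$, so the induction has no way to propagate the commutator through a product. This is precisely the obstruction I expect to be unavoidable: pairwise on-state commutativity is a one-vector condition and does not lift to the deeper states produced during a sequential measurement.

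I therefore expect the natural proof to succeed for $N\le 2$ and stall at $N=3$. This suggests that the honest hypothesis is not pairwise commutativity on $\ket{\psi}$ but full on-state \emph{permutability} of the family, namely invariance of $P_{\sigma(1)}\cdots P_{\sigma(N)}\ket{\psi}$ under all $\sigma$; under that stronger assumption the construction above should go through verbatim and, in fact, be equivalent to the existence of $D$. The place where the induction fails also pinpoints where to look for a counterexample to the conjecture as stated: three projectors commuting pairwise on $\ket{\psi}$ for which $P_1P_2P_3\ket{\psi}\neq P_3P_2P_1\ket{\psi}$, forcing incompatible order-dependent statistics and hence ruling out any consistent joint distribution.
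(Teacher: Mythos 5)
You were asked to prove a statement that the paper in fact \emph{disproves}, and your attempt correctly detects this: the obstruction you isolate---that $[P_i,P_j]\ket{\psi}=0$ is a single-vector condition giving no control over $[P_i,P_j]P_k\ket{\psi}$, so the adjacent-transposition induction cannot propagate past the first layer of applied projectors---is exactly why the conjecture fails. The paper's route matches your diagnosis: \Cref{thm:joint-perms} shows a quantum joint distribution exists if and only if the square-root operators \emph{fully permute} on-state (and are on-state projectors), which is your ``honest hypothesis'' made precise; and \Cref{lem:counterexample} then exhibits, by numerical constrained search, four projectors and a state $\ket{\phi}\in\mathbb{C}^8$ that pairwise commute on $\ket{\phi}$ yet satisfy $(P_1P_2P_3P_4-P_3P_4P_1P_2)\ket{\phi}\neq 0$, so pairwise on-state commutation does not imply on-state permutability and \Cref{formal-conjecture-notmod} is false. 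Your construction of $D$ and its reduction to permutation invariance is essentially the $(\Leftarrow)$ direction of \Cref{thm:joint-perms} specialized to projective measurements, and your observation that the argument survives for $N\le 2$ is consistent with the literature the paper cites for the all-states case.

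Two caveats on the part you left open. First, you predict failure already at $N=3$ via three pairwise-commuting projectors with $P_1P_2P_3\ket{\psi}\neq P_3P_2P_1\ket{\psi}$; the paper's counterexample uses \emph{four} projectors in dimension $8$ and does not settle whether three suffice. Note also that pairwise commutation on $\ket{\psi}$ already forces agreement between orderings that differ by a transposition adjacent to $\ket{\psi}$ (e.g.\ $P_3P_2P_1\ket{\psi}=P_3P_1P_2\ket{\psi}$), so the $N=3$ search space is more constrained than your sketch suggests. Second, your closing phrase ``forcing incompatible order-dependent statistics'' is doing real work: a violation of the vector equality $P_{\sigma(3)}P_{\sigma(2)}P_{\sigma(1)}\ket{\psi}\neq P_3P_2P_1\ket{\psi}$ does not by itself show that the outcome \emph{probabilities} (squared norms over all outcome strings) are order-dependent, and it is the probabilities that must disagree across orderings to contradict the existence of a single joint law whose marginals are order-blind. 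Closing that step is precisely the role of the ``only if'' direction of \Cref{thm:joint-perms}, and any self-contained disproof has to address it rather than pass directly from a vector inequality to nonexistence of $D$.
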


Motivated by this conjecture,
our goal is to study the behavior of $N$ random variables $X_1,X_2,\dots,X_N$ corresponding to the outcomes of a sequence of quantum measurements that commute on a set of quantum states $\mathcal{F}\subseteq\mathcal{D}(\mathcal{H}) $. Surprisingly, such results have only been studied for $\mathcal{F}=\mathcal{D}(\mathcal{H})$, i.e.\ measurements commuting on {\em all} quantum states.

The focal point of this note is to study which are the necessary and sufficient properties of the quantum setup, so that such a probability distribution is well-defined. With this in hand, we then have two applications. First, we disprove \Cref{conj-informal}. 
Secondly, we show a simpler proof for a variant of the result by Hastings~\cite{hastings2009making} on operators that almost-commute on specific states.

To be able to explain our contributions in more details, we will first start with a detour to very basic properties of probability distributions that arise from classical processes. Then, we discuss how these properties could be defined in the quantum setting (but, unfortunately, they do not hold for general quantum setups), and finally we state our results and discuss related works.

\subsection{Classical Distributions}\label{sec:classical-distr}
We discuss here properties of classical distributions that may be obvious at first but are crucial and not trivial in the quantum world.


In the following, we let $A,B,C$ be events that come from a classical experiment. We denote the event corresponding to $A$ not happening as $\overline{A}$, the probability that $A$ and $B$ both happen as $\PR[A,B]$, and the probability that $A$ happens conditioned on the fact that event B happens as $\PR[A|B] = \frac{\PR[A,B]}{\PR[B]}$ (assuming $\PR[B]\neq 0$).

The first property that we want to recall on classical distributions is that we can compute the {\em marginals} of the distribution when given the {\em joint} distribution:
\begin{feat}[Classical Marginals]\label{feat:cmarginals}
     $ \PR[A\mid C]=\PR[A,B \mid C]+\PR[A, \overline{B}\mid C] $.
\end{feat}

A second property that we want to recall is that the probability that $A$ and $\overline{A}$ occur is $0$, even when considering other events:
\begin{feat}[Classical Disjointness]\label{feat:disjointness}
$\PR[A,B\mid\bar{A}]\PR[\overline{A}]=\PR[A, B, \overline{A}] =0$.
\end{feat}



Another property that we have classically is  \emph{reducibility}, which says that the probability of events $A$ and $A$ both happening is the same as the probability of $A$.

\begin{feat}[Classical Reducibility]\label{feat:reducibility}
$\PR[A,B\mid A]\PR[A]=\PR[A, B, A]=\PR[A,B]$.
\end{feat}

Finally, the last property we study is 
\emph{sequential independence}\footnote{Sequential independence has been originally defined in \cite{gudder2001sequential} in the context of quantum measurements.} of random variables. Roughly, this property just says that the probability that event $A$ happens and that event $B$ happens is the same as the probability that event $B$ happens and that event $A$ happens.

\begin{feat}[Classical Sequential Independence]\label{feat:indepenedence}
$\PR[A\mid B]\PR[B] = \PR[A,B] = \PR[B\mid A]\PR[A]$.
\end{feat}

We stress that these properties hold trivially for {\em all} classical distributions and all events such that $ \PR[A]\neq 0 $, $ \PR[\overline{A}]\neq 0 $, $ \PR[B]\neq 0 $, and $ \PR[C]\neq 0 $.

\subsection{Quantum Distributions and their Properties}
Our goal is to find necessary conditions for the existence of a classical description of the experiment where we perform 
 a sequence of $ N $ general measurements, 
  irrespective of the order.
 More concretely, we aim to find the properties of measurement operators $Q_1,...,Q_N$ on specific subsets of quantum states $\mathcal{F}$ so that there exists a joint distribution of $X_1,X_2,\dots,X_N$ such that all marginals of this distribution on $X_{i_1},\dots,X_{i_t}$ correspond to measuring a state $\ket{\psi} \in \mathcal{F}$ with measurements $Q_{i_1},\dots,Q_{i_t}$. In this case, we call it a \emph{quantum distribution}.

The main obstacle in this task is the fact that quantum measurements do not necessarily commute, unlike in the classical world: the chosen order for performing the measurements influences the final probability distribution of the joint measurement outcomes.  Because of that, we will consider the quantum analog of Properties~\ref{feat:cmarginals} to~\ref{feat:indepenedence}, and study when such properties hold in the quantum case, and their implication for having such a joint distribution.
Our connections closely follow \cite{muynck1984derivation}, where they show that the existence of a joint distribution for {\em two} arbitrary quantum observables (Hermitian operators) on {\em every} quantum state is equivalent to their commutation. In this work, we show how to extend their analysis in two ways: we are interested in multiple observables and we consider specific sets of quantum states. In order to carry out this analysis, we extend the properties described in \Cref{sec:classical-distr} to quantum measurements and study their relations to each other. We leave the formal definitions of the quantum analogs of these classical properties to \Cref{sec:properties}.





\subsection{Our Results}
Using the formalism described in the previous section, we prove the following connections between joint quantum distributions and the measurement operators.


First, we show that quantumly, the marginal property also implies the sequential independence one.
\begin{result}[Informal statement of \Cref{thm:marginals-then-seqind}]
	If a joint distribution has the quantum marginal property, then it also has the quantum sequential independence property.
\end{result}

Then, we show that in the on-state case, we have that there is a quantum joint distribution iff all operators permute\footnote{Informally, a set of operators permutes on $\ket{\psi}$ if applied in any order they yield the same state: $A_1\cdots A_N\ket{\psi}=A_{\sigma(1)}\cdots A_{\sigma(N)}\ket{\psi}$, where $\sigma$ is a permutation.}. This result is a generalization of the classic results from \cite{nelson1967dynamical,fine1973probability,fine1982joint,muynck1984derivation} to the on-state case.
\begin{result}[Informal statement of \Cref{thm:joint-perms}]\label{result:2}
Fix a set of quantum states $\mathcal{F}$. 
A set of measurements yield a quantum joint distribution on each state in $\mathcal{F}$ iff these operators permute on every state in $\mathcal{F}$.
\end{result}

Then, we show that pairwise on-state commutation does not imply full on-state permutation, unlike in the case of permutation on all states. This fact---that we prove via a numerical example---together with Result~\ref{result:2} implies that \Cref{conj-informal} is false.
\begin{result}
\Cref{conj-informal} is false.
\end{result}

Finally, our last result is a simpler proof for a restricted version of Theorem~1 in \cite{hastings2009making}, which states that if two operators $A$ and $B$ almost-commute, we can find commuting operators $A'$ and $B'$ that are close to $A$ and $B$, respectively.
 In our case, we consider on-state commutation instead of the regular one, and unlike in \cite{hastings2009making}, our proof works only for projectors.
\begin{result}[Making almost commuting projectors commute]
	Given any two projectors $P_1$ and $P_2$ and a state $\ket{\psi}$ we have that if $\norm{(P_1P_2-P_2P_1)\ket{\psi}}=\epsilon$ then there is a projector $P_2'$ that is close to the original projector on the state $\norm{(P_2'-P_2)\ket{\psi}}\leq\sqrt{2}\epsilon$ and $[P_1,P_2']=0$.
\end{result}

\subsection{Related Work}\label{sec:related-work}
A prominent result in the literature is that a joint distribution for a set of measurements exists iff all the operators pairwise commute.
Different versions of this result were previously proven:
In \cite{nelson1967dynamical} the author considers the case of continuous variables and $N$ observables. A similar result but without specifying the Hilbert space is achieved with different mathematical tools in \cite{fine1982joint}. In the specific case where we have only two observables, we mention three works; In \cite{fine1973probability} and \cite{muynck1984derivation} the authors prove the classic problem in a way similar to each other, but using different mathematical tools. All but the first work mentioned here focus on the joint distribution as a functional from the space of states. An approach using $*$-algebras was presented by Hans Maassen in
~\cite{HassenNotes,maassen2010quantum}.

The authors of \cite{gudder2002sequentially} analyze the case of general measurements but prove that the measurement operators pairwise commute iff the square-root operators permute (Corollaries 3 and 6 in \cite{gudder2002sequentially}), in the sense of our Definition~\ref{def:onstate-perms} (for all states in $ \mathcal{H} $). In general the problem of conditional probabilities in Quantum Mechanics was discussed by Cassinelli and Zanghi in \cite{cassinelli1983conditional}.

The related problems of incompatible devices measurement and joint measurability of quantum effects are covered in \cite{heinosaari2016invitation} and \cite{bluhm2018joint} respectively.

In \cite{lin1997almost, friis1996almost} the authors prove that for any two Hermitian matrices if their commutator has small norm, then there are operators close to the originals that fully commute. In \cite{hastings2009making} Hastings proves how close the new operators are in terms of the norm of the commutator.

\subsection*{Organization}
In~\Cref{sec:prelim}, we provide some preliminaries. Then in \Cref{sec:main-distributions}, we discuss quantum distributions and their properties. Finally, in \Cref{sec:almost-commuting}, we discuss the almost-commuting case.

\subsection*{Acknowledgements}
JC thanks Dominique Unruh and Christian Schaffner for helpful discussions.
JC was supported by a NWO VIDI grant (Project No. 639.022.519).
Most of this work was done while AG was affiliated to CWI and QuSoft.

\section{Preliminaries}\label{sec:prelim}

\subsection{Notation}
In this work, we are going to use calligraphic letters ($\mathcal{S},\mathcal{R},...$) to denote sets. We denote $[N]:=\{1,2,\dots,N\}$. 
For $\mathcal{S}\subseteq[N]$,  we denote by $\mathcal{S}(i)$ the $i$-th element of the set $\mathcal{S}$ in ascending order. For some fixed sets $\mathcal{X}_1,...,\mathcal{X}_N$,
we denote by $\vec{x}$ an element of $\mathcal{X}_1\times\cdots\mathcal{X}_N$ and for $\mathcal{S}\subseteq [N]$ we have $\vec{x}_{\mathcal{S}}:=(x_{\mathcal{S}(1)},\dots,x_{\mathcal{S}(|\mathcal{S}|)})$.
We denote the set of all $t$-element permutations by $\Sigma_t$. For some complex number $c = a + b\textnormal{i}$, we define $\mathfrak{Re}(c) = a$ as its real part.

\subsection{Quantum Measurements\label{sec:quantum}}
We briefly review some concepts in quantum computation/information and we refer to \cite{nielsen2002quantum}
for an more detailed introduction to these topics.

Quantum states are represented by positive semi-definite operators with unit trace, i.e.,  $\rho\succeq 0, \Tr(\rho)=1$. We denote the set of all density operators on some Hilbert space $\mathcal{H}$ by $\mathcal{D}(\mathcal{H})$.

To describe general measurements, we use the notion of Positive Operator Valued Measure (POVM). The only requirement of POVMs is that they consist of positive operators and sum up to the identity operator.
More formally, a POVM with set of outcomes $\mathcal{X}$ is described by a set of operators $\mathcal{M}=\{ Q^x \}_{x\in\mathcal{X}}$, where $\forall x\in\mathcal{X}:Q^x\succeq 0$, and $\sum_{x\in\mathcal{X}} Q^x=\mathbbm{1}$.

We denote the probability of getting the outcome $x$ when measuring $\rho$ with the measurement $\mathcal{M}$ by $\PR[x\gets \mathcal{M}(\rho)]:= \Tr (Q^{x}\rho)$. To describe the post-measurement state, we can write down operators of $\mathcal{M}$ as products of linear operators on $\mathcal{H}$ (denoted by $\mathcal{L}(\mathcal{H})$), $ Q^x=A^{x\dagger}A^x$, where $A^x\in\mathcal{L}(\mathcal{H})$ (such a decomposition is always possible since $Q^x \succeq 0$). The post-measurement state when the outcome of $\mathcal{M}$ on $\rho$ is $x$ is given by
\begin{equation}
\rho_x:= \frac{A^x\rho A^{x\dagger}}{\Tr (Q^x\rho)}.
\end{equation}
The operator $A^x$ is called the \emph{square root} of $Q^x$.

\section{Quantum Distributions}
\label{sec:main-distributions}
In this section, we study the description of the statistics of outcomes derived from a sequence of measurements. 
Our approach is to consider the quantum version of the classical properties described in \Cref{sec:classical-distr}. Since quantum measurements do not commute in general, these quantum properties do not always hold. We then study the connection between properties of the measurements and the properties of their outcome distribution.



The structure of our proofs follows \cite{muynck1984derivation}, where they show that for two Hermitian observables there is a joint distribution for the outcomes of their joint measurement iff they commute. We stress that the result in \cite{muynck1984derivation} only works for measurements that commute on every quantum state and our result extends it to the case of joint distributions defined on a limited set of states.

In the following, we denote the observables with $Q$ and their square-roots with $R$.
In Section~\ref{sec:properties}, we define the quantum analogues of the classical properties of distributions defined in \Cref{sec:classical-distr}. Then, in \Cref{sec:distr-perm}, we state and prove the main result of this section, where we show a connection between existence of a distribution and permutability---a generalization of commutativity---of the corresponding measurement operators.

\subsection{Quantum Distributions}\label{sec:properties}
We analyse a functional from the set of density operators and finite sets of $N$ random variables $ X_1,X_2,\dots,X_N $ to reals $\W_{[N]}: \mathcal{D}(\mathcal{H})\times (\mathcal{X}_1\times\cdots\times \mathcal{X}_N)\to [0,1]$. We define this functional\footnote{Note that the superscript of $\W^{\rho}_{[N]}(\vec{x})$ denotes the first input to the functional, so we have $\W_{[N]}(\rho,\vec{x})$.} as
\begin{equation}
\W_{[N]}^{\rho}(\vec{x}) := \Tr \left( Q_{[N]}^{\vec{x}} \rho\right),\label{eq:wfuncdef}
\end{equation}
where $Q_{[N]}^{\vec{x}}$ is a positive semidefinite operator corresponding to the outcome $\vec{x}\in\mathcal{X}_1\times\cdots\times \mathcal{X}_N$. The subscript $[N]$ of $\W$ denotes the set of indices of random variables that the distribution is defined on. This definition is similar to the one proposed by \cite{muynck1984derivation}.

The starting point of our discussion is that every random variable $ X_i $ corresponds to the measurement $ \mathcal{M}_i=\{Q_i^{x_i}\} $. So we have to keep in mind that these operators are fixed throughout this note.

Given the definition of $\W$, we can state conditions so that it can be seen as a joint quantum distribution:
\begin{itemize}
	\item Normalization: 
	$ \sum_{\vec{x}}Q^{\vec{x}}_{[N]}=\mathbbm{1} $, which implies that
	for all $\rho\in\mathcal{D}(\mathcal{H}), \; \sum_{\vec{x}}\W^{\rho}(\vec{x})=1$.
	\item Linearity: for every $\vec{x}\in\mathcal{X}_1\times\cdots\times\mathcal{X}_N$,  $\rho_1,\rho_2\in\mathcal{D}(\mathcal{H})$ and  $\lambda_1,\lambda_2\in [0,1]$, we have that \[\W^{\lambda_1\rho_1+\lambda_2\rho_2}_{[N]}(\vec{x})=\lambda_1\W^{\rho_1}_{[N]}(\vec{x})+\lambda_2\W^{\rho_2}_{[N]}(\vec{x}).\]
	\item Non-negativity:  for every $\vec{x}\in\mathcal{X}_1\times\cdots\times\mathcal{X}_N$ and $\rho\in\mathcal{D}(\mathcal{H})$, we have $\W^{\rho}_{[N]}(\vec{x})\geq 0$.
\end{itemize}

We describe the quantum analogues of the properties described in \Cref{sec:classical-distr}.

\paragraph{Marginals.}
We start with a sequence of general measurements (i.e.\ POVMs): $ \{\mathcal{M}_i\}_{i\in [N]} $, where $ \mathcal{M}_i:=\{Q_i^{x}\}_{x\in\mathcal{X}_i} $ for all $ i $ and all $ Q_i^{x}\succeq 0 $ and $ \sum_{x\in\mathcal{X}_i}Q^{x}_i=\mathbbm{1} $. Moreover,  for $ \sets\subseteq[N] $ we define $ Q^{\vec{y}}_{\sets} $ to be measurement operators  where $ \vec{y}\in\mathcal{X}_{\sets(1)}\times\cdots\times\mathcal{X}_{\sets(|\sets|)} $.

Given $Q^{\vec{y}}_{\sets}$ and their corresponding square-roots $ R^{\vec{y}}_{\sets} $ and $\rho$,
we have that if $ \Tr \left(R^{\vec{y}}_{\mathcal{S}}\rho R^{\vec{y}\dagger}_{\mathcal{S}} \right)\neq 0 $, then we define the conditional distribution for any sequence $\vec{x}$ as
\begin{equation}\label{key}
\W_{[N]}^{\rho}(\vec{x}\mid \vec{y}):= \W_{[N]}^{R^{\vec{y}}_{\mathcal{S}}\rho R^{\vec{y}\dagger}_{\mathcal{S}}}(\vec{x})/\Tr R^{\vec{y}}_{\mathcal{S}}\rho R^{\vec{y}\dagger}_{\mathcal{S}}.
\end{equation}

For all those measurements, for a set $ \mathcal{F}\subseteq\mathcal{D}(\mathcal{H}) $, and for $\mathcal{U}\subseteq[N]$, we define the ``orbit'' of the post-measurement states.
For any $\mathcal{T}\subseteq[N]$ of size $t$, we take  $s\leq t$ sets $\mathcal{S}_1,...,\mathcal{S}_s$ that are a  partition of $\mathcal{T}$. We then consider the post-measurement states generated by sequences of measurements corresponding to $\mathcal{S}_i$: 
    \begin{align}\label{eq:setg-def}
	&\setg_{\mathcal{U}}(\mathcal{F}) :=  \left\{ R^{\vec{y}_s}_{\sets_s}\cdots R^{\vec{y}_1}_{\sets_1}\psi R^{\vec{y}_1\dagger}_{\sets_1}\cdots R^{\vec{y}_s\dagger}_{\sets_s}/\Tr \left(R^{\vec{y}_s}_{\sets_s}\cdots R^{\vec{y}_1}_{\sets_1}\psi R^{\vec{y}_1\dagger}_{\sets_1}\cdots R^{\vec{y}_s\dagger}_{\sets_s}\right):\psi\in\mathcal{F}, \mathcal{T}\subseteq\mathcal{U},\right.\nonumber\\
	&\left. s\leq \abs{\mathcal{T}}, \;  \sets_1,...,\sets_s\subseteq\mathcal{T} , \bigcup_{i=1}^s\sets_i=\mathcal{T}, \forall i\neq j \, \mathcal{S}_i\cap\mathcal{S}_j=\emptyset, \vec{y}_i\in\mathcal{X}_{\sets_i(1)}\times\cdots\times\mathcal{X}_{\sets_i(|\sets_i|)} \right\},
\end{align}
where $ R^{\vec{y}_i}_{\sets_i} $ are the square-root operators of $ Q^{\vec{y}_i}_{\sets_i}=R^{\vec{y}_i\dagger}_{\sets_i}R^{\vec{y}_i}_{\sets_i} $. The subscript of $\setg$ denotes the set we take the subsets of, usually it is $[N]$ but later we also consider $[N]\setminus\sets$ for some $\sets$.

With our quantum marginals property, we require that the operator we get after we sum over a subset of variables is still a valid measurement operator.
\begin{feat}[Quantum Marginals]\label{feat:qmarginals}
We say that the joint distribution $\W$ has the \emph{quantum marginals property on set $\mathcal{F}$} if 
for every $\mathcal{S} \subseteq [N]$,
there is a measurement $ \mathcal{M}_{\sets} = \{Q^{\vec{y}}\}_{\vec{y} \in (\mathcal{X}_i)_{i \in \mathcal{S}}}$
such that for every value $ \vec{x}\in \mathcal{X}_1\times \mathcal{X}_2\times\cdots\times \mathcal{X}_N  $, denoting $ \vec{x}:=(x_1,x_2,\dots,x_N) $
 and for every density operator $ \rho\in\setg_{[N]}(\mathcal{F})$ defined as in Equation~\eqref{eq:setg-def} we have that 
	\begin{align}
	&\W^{\rho}_{\mathcal{S}}(\vec{x}_{\mathcal{S}}):= \Tr \left(Q^{\vec{x}_{\mathcal{S}}}_{\mathcal{S}}\rho\right)=\sum_{x_i\in \mathcal{X}_i,i\in[N]\setminus \mathcal{S} }\W^{\rho}_{[N]}(\vec{x}).
	\end{align}
	Additionally for $|\mathcal{S}| = 1$ the operators $ Q^{x_i}_{i} $ are the operators from $ \mathcal{M}_{i} $.
\end{feat}

\paragraph{Disjointness.}

It follows from the definition of POVMs that the quantum measurement operators need not be orthogonal, and this implies that the disjointness property (Property~\ref{feat:disjointness} does not hold in generality quantumly.

Disjointness is a property that concerns a post-measurement state of a set $\mathcal{S}$ of variables. To ensure the existence of a measurement operator corresponding to $\mathcal{S}$, we need to assume Property~\ref{feat:qmarginals}.
\begin{feat}[Quantum Disjointness]\label{feat:qdisjoint}
Let $\W$ be a
 joint distribution for which Property~\ref{feat:qmarginals} holds. We say that $\W$ has the \emph{quantum disjointness property on set $\mathcal{F}$} if for every subset $\mathcal{S} \subseteq [N]$, for every density operator  $ \rho\in\setg_{[N]\setminus\sets}(\mathcal{F}) $, and for every value $ \vec{x}\in \mathcal{X}_1\times \mathcal{X}_2\times\cdots\times \mathcal{X}_N  $ and $\vec{y} \in \prod_{i \in \mathcal{S}}\mathcal{X}_i$, we have that
if $ \vec{y}\neq \vec{x}_{\mathcal{S}}$, then 
\begin{align}
		&\W^{\rho}_{[N]}(\vec{x}\mid \vec{y})\W^{\rho}_{[N]}( \vec{y})=\Tr \left(Q^{\vec{x}}_{[N]} R^{\vec{y}}_{\mathcal{S}} \rho  R^{\vec{y}\dagger}_{\mathcal{S}}\right)=0.
	\end{align}
\end{feat}

\paragraph{Reducibility.}
Reducibility (Property~\ref{feat:reducibility}) is a similar property to disjointness but with the key difference that we condition on the same event:
	\begin{feat}[Quantum Reducibility]\label{feat:qreduce}
Let $\W$ be a joint distribution for which Property~\ref{feat:qmarginals} holds. We say that $\W$ has the \emph{quantum reducibility property on set $\mathcal{F}$}
if for every subset $\mathcal{S} \subset [N]$, every density operator $ \rho\in\setg_{[N]\setminus\sets}(\mathcal{F}) $, 
and value $ \vec{x}\in \mathcal{X}_1\times \mathcal{X}_2\times\cdots\times \mathcal{X}_N  $, we have that 
		\begin{align}
			&\W^{\rho}_{[N]}(\vec{x}\mid\vec{x}_{\mathcal{S}})\W^{\rho}_{[N]}(\vec{x}_{\mathcal{S}})=\Tr \left( Q^{\vec{x}}_{[N]}R^{\vec{x}_{\mathcal{S}}}_{\mathcal{S}}
			\rho
			R^{\vec{x}_{\mathcal{S}}\dagger}_{\mathcal{S}}\right)=\Tr\left( Q^{\vec{x}}_{[N]}
			\rho
			\right).
		\end{align}
	\end{feat}

\phantom{new line}

Note that the last two properties together allow us to conclude that the operators are (morally) \emph{on-state projections}: Property~\ref{feat:qdisjoint} plays the role of different projectors being orthogonal and Property~\ref{feat:qreduce} that projecting twice to the same space does not change the resulting state.

More concretely, we say that the $R_i$'s are on-state projectors on $\psi\in\mathcal{F}$ if for all $\mathcal{S}\subseteq[N]$, all $\vec{x}\in \mathcal{X}_1\times\cdots\times \mathcal{X}_N$, all $\vec{y} \in \prod_{i \in \mathcal{S}}\mathcal{X}_i$, and for $R_{\mathcal{S}}^{\vec{y}}:=R^{y_{1}}_{\mathcal{S}(1)}R^{y_{2}}_{\mathcal{S}(2)} \cdots R^{y_{t}}_{\mathcal{S}(t)} $ and $Q_{\mathcal{S}}^{\vec{y}}:=R_{\mathcal{S}}^{\vec{y}\dagger}R_{\mathcal{S}}^{\vec{y}} $ (similarly for $[N]$), we have 
\begin{align}\label{eq:on-state-projectors}
	&\Tr \left(R^{\vec{y}\dagger}_{\sets} Q^{\vec{x}}_{[N]}R^{\vec{y}}_{\sets} \psi\right) = \delta_{\vec{y},\vec{x}_{\sets}} \Tr \left(Q^{\vec{x}}_{[N]}\psi\right).
\end{align}

\paragraph{Sequential Independence}
As previously discussed, the notion of time order in the quantum setting  is much more delicate as the probabilistic events no longer commute. Let us go back to the example of the simple sequence $(A,B)$ from Section~\ref{sec:classical-distr} but now consider $A$ and $B$ as quantum observables measured on the state $\rho$. Let us assume for simplicity that $A$ and $B$ are projections. The probability of measuring $a$ with $A$ is $\Tr \left(A \rho\right)$ and the state after this measurement is $\rho_a:=\frac{A\rho A}{\Tr \left(A\rho\right)}$ so the probability of measuring the sequence $(a,b)$ equals
\begin{equation}
\PR[b\gets B(\rho_a)]\PR[a\gets A(\rho)]=\Tr \left(B \frac{A\rho A}{\Tr (A\rho)} \right) \Tr \left(A\rho\right)=\Tr \left(ABA\rho\right).\label{eq:abarho}
\end{equation}
On the other hand the probability of measuring the sequence $(b,a)$ equals $\Tr \left( BAB\rho \right)$ which is in general different than Equation~\eqref{eq:abarho}. This simple example shows that sequential independence is not attained by all quantum joint probabilities. More formally, the notion of sequential independence from \cite{gudder2002sequentially} for quantum joint probabilities can be stated as follows.

\begin{feat}[Quantum Sequential Independence]\label{feat:qseq-ind}
Let $\W$ be a
 joint distribution for which Property~\ref{feat:qmarginals} holds. We say that $\W$ has  the \emph{quantum sequential independence property on set $\mathcal{F}$}
if 
for every density operator $ \psi\in\mathcal{F} $, for any $\mathcal{T}\subseteq[N]$ of size $t$, for all $s\leq t$, partition $\sets_1,...,\sets_s$ of $\mathcal{T}$, permutation $ \sigma \in \Sigma_{s}$, and $ \vec{x}\in \mathcal{X}_1\times \mathcal{X}_2\times\cdots\times \mathcal{X}_N  $ such that $ \vec{x}:=(x_1,x_2,\dots,x_N) $ and $ \vec{y}_i:= \left(x_{\mathcal{S}_i(1)},x_{\mathcal{S}_i(2)},\dots,x_{\mathcal{S}_i(\abs{\sets_i})}\right) $,  we have that 
	\begin{align}
		\Tr &\left(Q^{\vec{y}_s}_{\sets_s}  R^{\vec{y}_{s-1}}_{\sets_{s-1}} \cdots R^{\vec{y}_{1}}_{\sets_{1}}\psi R^{\vec{y}_{1}\dagger}_{\sets_{1}}R^{\vec{y}_{2}\dagger}_{\sets_{2}}\cdots R^{\vec{y}_{s-1}\dagger}_{\sets_{s-1}} \right)\nonumber\\
		& =	\Tr \left(Q^{\vec{y}_{\sigma(s)}}_{\sets_{\sigma(s)}}  R^{\vec{y}_{\sigma(s-1)}}_{\sets_{\sigma(s-1)}} \cdots R^{\vec{y}_{\sigma(1)}}_{\sets_{\sigma(1)}}\psi R^{\vec{y}_{\sigma(1)}\dagger}_{\sets_{\sigma(1)}}R^{\vec{y}_{\sigma(2)}\dagger}_{\sets_{\sigma(2)}}\cdots R^{\vec{y}_{\sigma(s-1)}\dagger}_{\sets_{\sigma(s-1)}}\right).
	\end{align}
\end{feat}

\begin{thm}[Marginals imply Sequential Independence]\label{thm:marginals-then-seqind}
	If a joint distribution $\W$ has Property~\ref{feat:qmarginals}, \ref{feat:qdisjoint}, and \ref{feat:qreduce}, then it also has Property~\ref{feat:qseq-ind}.
\end{thm}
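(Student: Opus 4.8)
The plan is to show that every sequential block-measurement probability appearing in Property~\ref{feat:qseq-ind} collapses to a single order-independent quantity, namely $\Tr(Q^{\vec{x}_{\mathcal{T}}}_{\mathcal{T}}\psi)$, where $Q^{\vec{x}_{\mathcal{T}}}_{\mathcal{T}}$ is the marginal operator on $\mathcal{T}=\sets_1\cup\cdots\cup\sets_s$ guaranteed by Property~\ref{feat:qmarginals}. Once this is established, sequential independence is immediate: the left- and right-hand sides of Property~\ref{feat:qseq-ind} are sequential probabilities built from the same blocks in two different orders, so each equals $\Tr(Q^{\vec{x}_{\mathcal{T}}}_{\mathcal{T}}\psi)$ and they coincide. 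Note that this target quantity is manifestly symmetric in the blocks, which is exactly the invariance we want.

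Concretely, I would prove the key identity
\[
\Tr(Q^{\vec{y}_s}_{\sets_s} R^{\vec{y}_{s-1}}_{\sets_{s-1}}\cdots R^{\vec{y}_1}_{\sets_1}\psi R^{\vec{y}_1\dagger}_{\sets_1}\cdots R^{\vec{y}_{s-1}\dagger}_{\sets_{s-1}}) = \Tr(Q^{\vec{x}_{\mathcal{T}}}_{\mathcal{T}}\psi).
\]
First I would expand the outermost operator $Q^{\vec{y}_s}_{\sets_s}$ via Property~\ref{feat:qmarginals}, rewriting the left-hand side as $\sum_{x_i:\,i\notin\sets_s}\Tr(Q^{\vec{x}}_{[N]}\rho_{s-1})$, where $\rho_{s-1}$ is the (unnormalized) state obtained after measuring the blocks $\sets_1,\dots,\sets_{s-1}$ and $\vec{x}_{\sets_s}=\vec{y}_s$ is fixed; since the marginals identity is a homogeneous trace equality, it extends from the normalized orbit state to $\rho_{s-1}$. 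This step converts the marginal operator into full-set operators $Q^{\vec{x}}_{[N]}$, to which Properties~\ref{feat:qdisjoint} and~\ref{feat:qreduce} apply. Then I would peel off the projectors $R^{\vec{y}_{s-1}}_{\sets_{s-1}},\dots,R^{\vec{y}_1}_{\sets_1}$ one at a time from the outside in: by Property~\ref{feat:qdisjoint} each term vanishes unless $\vec{x}_{\sets_j}=\vec{y}_j$, and by Property~\ref{feat:qreduce} the surviving terms satisfy $\Tr(Q^{\vec{x}}_{[N]} R^{\vec{y}_j}_{\sets_j}\rho_{j-1}R^{\vec{y}_j\dagger}_{\sets_j})=\Tr(Q^{\vec{x}}_{[N]}\rho_{j-1})$. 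After all blocks are peeled, the surviving sum ranges only over $i\notin\mathcal{T}$ with $\vec{x}_{\mathcal{T}}$ pinned to the prescribed outcomes, so it equals $\sum_{x_i:\,i\notin\mathcal{T}}\Tr(Q^{\vec{x}}_{[N]}\psi)=\Tr(Q^{\vec{x}_{\mathcal{T}}}_{\mathcal{T}}\psi)$ by a final application of Property~\ref{feat:qmarginals} to $\psi$.

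The step requiring the most care---and the main obstacle---is justifying each peeling operation. Properties~\ref{feat:qdisjoint} and~\ref{feat:qreduce} may only be invoked for states lying in the orbit $\setg_{[N]\setminus\sets_j}(\mathcal{F})$, so I would verify that the intermediate normalized state obtained after measuring $\sets_1,\dots,\sets_{j-1}$ indeed belongs to $\setg_{[N]\setminus\sets_j}(\mathcal{F})$. This holds precisely because $\sets_1,\dots,\sets_s$ is a partition, so the already-measured blocks are disjoint from $\sets_j$, hence contained in $[N]\setminus\sets_j$, matching the definition in Equation~\eqref{eq:setg-def}. I would also dispatch the degenerate case: if some intermediate trace vanishes then the corresponding post-measurement operator is the zero operator, making the whole sequential probability zero; using the consistency of the marginal operators (summing $Q^{\vec{x}_{\mathcal{T}}}_{\mathcal{T}}$ over $\mathcal{T}\setminus\sets_j$ returns $Q^{\vec{y}_j}_{\sets_j}$, which follows from Property~\ref{feat:qmarginals}) one checks that $\Tr(Q^{\vec{x}_{\mathcal{T}}}_{\mathcal{T}}\psi)$ vanishes as well, so the identity still holds and the argument goes through uniformly.
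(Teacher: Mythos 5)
Your proposal is correct and follows essentially the same route as the paper's proof: both reduce every sequential block probability to the manifestly order-independent quantity $\Tr\left(Q^{\vec{x}_{\mathcal{T}}}_{\mathcal{T}}\psi\right)$ by expanding via the marginals property, killing cross terms with disjointness, and stripping the conjugating square roots with reducibility, peeling one block at a time. The only differences are bookkeeping (you expand all the way to $Q^{\vec{x}}_{[N]}$ and peel there, while the paper works with intermediate marginal operators on unions of blocks), and you are somewhat more careful about orbit membership, homogeneity under unnormalized states, and the zero-trace degenerate case, which the paper passes over.
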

\begin{proof}
	First note that Properties~\ref{feat:qmarginals}, \ref{feat:qdisjoint}, and \ref{feat:qreduce} directly imply Equation~\eqref{eq:on-state-projectors}.
	
	We now prove the statement for two sets of indices and then argue how to extend it for $s$ sets. In our restricted case, we have that
	\begin{align}
		& \Tr \left(Q^{\vec{y}_1}_{\sets_1} R^{\vec{y}_2}_{\sets_2}\psi R^{\vec{y}_2\dagger}_{\sets_2}\right) =\Tr \left(\sum_{\vec{y}'}  Q^{\vec{y}'}_{\sets_1\cup\sets_2} R^{\vec{y}_2}_{\sets_2}\psi R^{\vec{y}_2\dagger}_{\sets_2}\right)\\
		& =\Tr  \left(Q^{\vec{y}}_{\sets_1\cup\sets_2} R^{\vec{y}_2}_{\sets_2}\psi R^{\vec{y}_2\dagger}_{\sets_2} \right) \\
		& =\Tr  \left(Q^{\vec{y}}_{\sets_1\cup\sets_2} \psi\right),\label{eq:marg-seq-step}
	\end{align}
where the	first equality comes from the marginals property,
the second equality comes from the disjointness property if we take $ \vec{y} $ that agrees with $ \vec{y}_2 $ on $ \sets_{2} $, and the last equality comes from the reducibility property. 

The above derivation can be repeated for any other two sets $\sets'_1$ and $\sets'_2$ such that $\sets'_1\cup\sets'_2=\sets_1\cup\sets_2$. Any pair like this will yield $\Tr  \left(Q^{\vec{y}}_{\sets_1\cup\sets_2} \psi\right)$ that equals  $\Tr  \left(Q^{\vec{y}}_{\sets'_1\cup\sets'_2} \psi\right)$ for all other sets $\sets'_1$ and $\sets'_2$, hence we have proven Property~\ref{feat:qseq-ind} for any two subsets.

In general, we prove a slightly stronger statement than Property~\ref{feat:qseq-ind}. We prove that not only different sequences give the same probabilities but also that these probabilities equal $\Tr  \left(Q^{\vec{y}}_{\mathcal{T}} \psi\right)$ for some operator $Q^{\vec{y}}_{\mathcal{T}}$.

The general case holds by taking $ \rho\in\setg_{[N]\setminus\sets_s}(\mathcal{F}) $ instead of $\psi$ and use it in the above calculation. To prove Property~\ref{feat:qseq-ind} for any $s$ sets we consider $\rho=  R^{\vec{y}_{s-1}}_{\sets_{s-1}} \cdots R^{\vec{y}_{1}}_{\sets_{1}}\psi R^{\vec{y}_{1}\dagger}_{\sets_{1}}R^{\vec{y}_{2}\dagger}_{\sets_{2}}\cdots R^{\vec{y}_{s-1}\dagger}_{\sets_{s-1}}$ and $\Tr\left(Q^{\vec{y}_s}_{\sets_s}\rho\right)$. We shave off operators from $\rho$ one by one using Equation~\eqref{eq:marg-seq-step}. After $s$ steps we have that $\Tr\left(Q^{\vec{y}_s}_{\sets_s}\rho\right)= \Tr\left(Q^{\vec{x}_{\mathcal{T}}}_{\mathcal{T}}\psi\right)$. Again, repeating this procedure for a different $\rho'$ with sets that sum to the same $\mathcal{T}$ yields the claimed result.
\end{proof}

\subsection{Joint Distributions and Permutability}\label{sec:distr-perm}

In this section, we state the definition of a joint distribution that describes a sequence of quantum measurements done on states from a small set. We also define a generalization of commutativity and prove that a joint distribution exists if and only if the measurement operators are permutable.

\begin{defi}[Quantum Joint Distribution On State]\label{def:quantdist}
A joint distribution of  $ N $ random variables $ X_1,\dots, X_N $ that describe outcomes of general quantum measurements of states $ \psi\in\mathcal{F} $ is defined as a positive, normalized, and linear functional 
\begin{align}
	&\W_{[N]}: \mathcal{D}(\mathcal{H})\times(\mathcal{X}_{1}\times\mathcal{X}_{2}\times\cdots\times\mathcal{X}_{N})\to[0,1],
\end{align}
for which
\begin{enumerate}[label={(\arabic*)}]
\item Quantum Marginals on states in $\mathcal{F}$, Property~\ref{feat:qmarginals}, holds,
\item Quantum Disjointness on states in $\mathcal{F}$, Property~\ref{feat:qdisjoint}, holds,
\item Quantum Reducibility on states in $\mathcal{F}$, Property~\ref{feat:qreduce}, holds,
\item Quantum Sequential Independence on states in $\mathcal{F}$, Property~\ref{feat:qseq-ind}, holds.
\end{enumerate}
\end{defi}

Next we show the connection between the existence of joint distributions and requirements on the measurement operators.
But first let us define the notion of on-state permutability.
\begin{defi}[On-state permutator, (fully) permutable operators]\label{def:onstate-perms}
	For any $s$ operators $R_i$ and a permutation of the $s$-element set $\sigma\in\Sigma_s$ the permutator on $\psi$ is defined as 
	\begin{align}
	[ R_1,& R_2, \dots, R_s ]_{\psi}(\sigma) := \nonumber \\
	&\Tr\left( R_{s}R_{s-1}\cdots R_{1}\psi R^{\dagger}_{1}R^{\dagger}_{2}\cdots R^{\dagger}_{s}\right) - \Tr \left( R_{\sigma(s)}R_{\sigma(s-1)}\cdots R_{\sigma(1)}\psi R^{\dagger}_{\sigma(1)}R^{\dagger}_{\sigma(2)}\cdots R^{\dagger}_{\sigma(s)}\right).
	\end{align}
We say that the operators  $R_1, \dots, R_s$ are \emph{permutable} if $[ R_1, \dots, R_s ]_{\psi}(\sigma) =0$ for all $\sigma\in\Sigma_s$. Moreover, we call a set of measurements $\{\mathcal{M}_i\}_i$ \emph{fully permutable on $\mathcal{F}$} if all square-root operators $ R^{\vec{y}}_i $ of these measurements $ \mathcal{M}_i $ permute on $ \psi\in\mathcal{F} $ for all $\sigma\in\Sigma_s$.
\end{defi}

Now we state the theorem connecting existence of joint distributions with permutability of the measurement operators. This statement extends Theorem $3.2$ of \cite{muynck1984derivation}, where they prove that if the joint distribution satisfies the marginals property (they use the name ``nondisturbance''), then the operators pairwise commute. 
\begin{thm}[Quantum Joint Distribution and Permutability]\label{thm:joint-perms}
	There is a quantum joint distribution on states $ \psi\in\mathcal{F} $ describing measurement outcomes of $N$ observables $X_1,\dots,X_N$ if and only if all square roots $ R^{x}_i $ of $ Q^{x}_i $ operators of measurements $ \mathcal{M}_i $ permute on $ \psi\in\mathcal{F} $ and are on-state projectors according to Equation~\eqref{eq:on-state-projectors}.
\end{thm}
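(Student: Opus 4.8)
The plan is to prove both directions by taking the candidate operators to be the nested products $R_{[N]}^{\vec x} := R_1^{x_1}\cdots R_N^{x_N}$ and $Q_{[N]}^{\vec x} := R_{[N]}^{\vec x\dagger}R_{[N]}^{\vec x}$, so that $\W_{[N]}^\rho(\vec x) = \Tr(Q_{[N]}^{\vec x}\rho)$, and then to lean on \Cref{thm:marginals-then-seqind} so that I never verify sequential independence by hand once marginals, disjointness, and reducibility are in place. Positivity, linearity, and normalization of this functional are immediate: $Q_{[N]}^{\vec x}$ is manifestly of the form $A^\dagger A$, the trace is linear, and $\sum_{\vec x}Q_{[N]}^{\vec x}=\mathbbm{1}$ follows by summing the innermost pair $\sum_{x_1}R_1^{x_1\dagger}R_1^{x_1}=\mathbbm{1}$ and peeling off the variables one at a time using only the completeness $\sum_{x_i}Q_i^{x_i}=\mathbbm{1}$ of each $\mathcal M_i$ (permutability is not needed for this step).

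For the forward direction, assume $\W$ is a quantum joint distribution in the sense of \Cref{def:quantdist}. The on-state projector identity \eqref{eq:on-state-projectors} is exactly what the opening of the proof of \Cref{thm:marginals-then-seqind} already extracts from Properties~\ref{feat:qmarginals}--\ref{feat:qreduce}, so that half is free. To obtain permutability I would specialize the sequential independence Property~\ref{feat:qseq-ind} to the singleton partition $\sets_i=\{i\}$ of a chosen $\mathcal T$: then $Q_{\sets_s}^{\vec y_s}=Q_s^{x_s}=R_s^{x_s\dagger}R_s^{x_s}$, and after cycling the leftmost $R_s^{x_s\dagger}$ around to the far right under the trace, the two sides of Property~\ref{feat:qseq-ind} become precisely the two terms of the permutator $[\,\cdots\,]_\psi(\sigma)$ of \Cref{def:onstate-perms}. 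Hence every permutator vanishes and the $R_i^x$ permute on each $\psi\in\mathcal F$.

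For the reverse direction, assume the $R_i^x$ permute on $\mathcal F$ and satisfy \eqref{eq:on-state-projectors}. Define the marginal POVM by $Q_\sets^{\vec y}:=R_\sets^{\vec y\dagger}R_\sets^{\vec y}$; the same peeling argument as for normalization shows $\sum_{\vec y}Q_\sets^{\vec y}=\mathbbm{1}$, so $\mathcal M_\sets$ is a genuine measurement, reducing to $\mathcal M_i$ when $|\sets|=1$. The core computation is the marginals Property~\ref{feat:qmarginals}: writing $\Tr(Q_{[N]}^{\vec x}\psi)=\Tr(R_{[N]}^{\vec x}\psi R_{[N]}^{\vec x\dagger})$ by cyclicity, I would use permutability to reorder the $N$ square roots so that the indices outside $\sets$ sit outermost, cycle their daggers next to them, and sum them out via $\sum R_i^{x_i\dagger}R_i^{x_i}=\mathbbm{1}$, leaving exactly $\Tr(Q_\sets^{\vec x_\sets}\psi)$. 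Disjointness (Property~\ref{feat:qdisjoint}) and reducibility (Property~\ref{feat:qreduce}) are then read straight off \eqref{eq:on-state-projectors}: the Kronecker $\delta_{\vec y,\vec x_\sets}$ produces the vanishing when $\vec y\neq\vec x_\sets$ and the surviving $\Tr(Q_{[N]}^{\vec x}\psi)$ when $\vec y=\vec x_\sets$. Property~\ref{feat:qseq-ind} then comes for free from \Cref{thm:marginals-then-seqind}.

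The main obstacle is that Properties~\ref{feat:qmarginals}--\ref{feat:qreduce} are demanded not only on $\mathcal F$ but on the whole orbit $\setg_{[N]}(\mathcal F)$ (and $\setg_{[N]\setminus\sets}(\mathcal F)$) of post-measurement states, whereas permutability and \eqref{eq:on-state-projectors} are hypotheses about $\psi\in\mathcal F$ alone. The real work is therefore to propagate the reordering and projector arguments from $\mathcal F$ to these post-measurement states: a generic $\rho\in\setg$ has the form $R_{\sets'}^{\vec z}\psi R_{\sets'}^{\vec z\dagger}/\Tr(\cdots)$, so every trace I must control is still a trace of square roots against some $\psi\in\mathcal F$, but now with possibly repeated indices coming from both the conditioning operators and the operator being marginalized. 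I expect the delicate step to be bookkeeping these coincidences, absorbing a repeated index via the idempotent ($\vec y=\vec x_\sets$) part of \eqref{eq:on-state-projectors} and killing a mismatched one via its orthogonal ($\vec y\neq\vec x_\sets$) part, so that after reduction the plain on-$\mathcal F$ permutability suffices. Everything else is cyclicity of the trace and POVM completeness.
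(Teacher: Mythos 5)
Your proposal follows essentially the same route as the paper: the same nested square-root operators $Q_{\mathcal S}^{\vec y}=R_{\mathcal S}^{\vec y\dagger}R_{\mathcal S}^{\vec y}$, the same peeling argument via POVM completeness for marginals, the same specialization of Property~\ref{feat:qseq-ind} to singleton sets for permutability, reading disjointness and reducibility off Equation~\eqref{eq:on-state-projectors}, and the same appeal to \Cref{thm:marginals-then-seqind} to avoid checking sequential independence directly. If anything, you are more explicit than the paper about the one delicate point—propagating the on-$\mathcal F$ hypotheses to the orbit $\setg_{[N]}(\mathcal F)$ of post-measurement states—which the paper dispatches with a single remark rather than the bookkeeping you correctly identify as the real work.
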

\begin{proof}
	
	$(\Rightarrow)$
	Permutability follows from Property~\ref{feat:qseq-ind} for $ N $ single-element sets $ \sets_i=\{i\} $. Being on-state projectors (Equation~\eqref{eq:on-state-projectors}) follows from Property~\ref{feat:qdisjoint} and Property~\ref{feat:qreduce}. 
	
	$(\Leftarrow)$
	The other direction of the proof follows by setting the measurement operators to $Q_{\mathcal{S}}^{\vec{y}}:=R^{y_t\dagger}_{\mathcal{S}(t)}R^{y_{t-1}\dagger}_{\mathcal{S}(t-1)}\cdots
	 R^{y_{1}\dagger}_{\mathcal{S}(1)}R^{y_{1}}_{\mathcal{S}(1)}R^{y_{2}}_{\mathcal{S}(2)} \cdots R^{y_{t}}_{\mathcal{S}(t)} $, for every set $ \sets \subseteq [N] $ with $ |\sets|=t $. Similarly we define 
	 $R_{\mathcal{S}}^{\vec{y}}:=R^{y_{1}}_{\mathcal{S}(1)}R^{y_{2}}_{\mathcal{S}(2)} \cdots R^{y_{t}}_{\mathcal{S}(t)} $.
	
	The marginals property follows from the fact that $\sum_{x_{i}\in\mathcal{X}_i}Q_i^{x_i}=\mathbbm{1}$: 
	\begin{align}
		&\sum_{x_i\in \mathcal{X}_i,i\in[N]\setminus \mathcal{S} }\Tr \left(Q^{x_{i_i}}_{i_1} R^{\vec{y}}_{[N]\setminus\{i_1\}} \rho R^{\vec{y}\dagger}_{[N]\setminus\{i_1\}} \right) \nonumber\\
		&= \sum_{x_i\in \mathcal{X}_i,i\in[N]\setminus (\mathcal{S}\cup\{i_1\}) }\Tr \left(Q^{x_{i_2}}_{i_2} R^{\vec{y}}_{[N]\setminus\{i_1,i_2\}} \rho R^{\vec{y}\dagger}_{[N]\setminus\{i_1,i_2\}} \right) = \cdots = \Tr \left(Q^{\vec{y}}_{\mathcal{S}}\rho\right).
	\end{align}
	
	Properties~\ref{feat:qdisjoint} and~\ref{feat:qreduce} are natural consequences of Equation~\eqref{eq:on-state-projectors}. The only difference between the properties and on-state projections is the set $\setg_{[N]\setminus\sets}(\mathcal{F})$ versus just $\mathcal{F}$. Nonetheless, with our definition of $R_{\mathcal{S}}^{\vec{y}}$, Equation~\eqref{eq:on-state-projectors} implies disjointness and reducibility.
	
	In Theorem~\ref{thm:marginals-then-seqind} we have already proved that marginals together with disjointness and reducibility imply sequential independence, which concludes our proof.
\end{proof}

\subsection{Pairwise on-state commutation does not imply full commutation}\label{sec:counterexample}
We now investigate whether full permutability is the weakest assumption we can have for joint distributions to exist.

When we consider this question for the full Hilbert space $ \mathcal{F}=\mathcal{D}(\mathcal{H}) $, this problem has been considered by a number of works in the literature~\cite{nelson1967dynamical,fine1973probability,fine1982joint,muynck1984derivation} and it is well-known that it suffices for the measurement operators to pairwise commute, i.e., pairwise commutation on all possible quantum states implies permutability of the operators.

Our goal in this section is to consider the case where $ \mathcal{F} \subsetneq \mathcal{D}(\mathcal{H}) $. In particular, in \cite{carstens2018quantum}, in order to connect perfect quantum indifferentiability to classical indifferentiability with stateless simulators\footnote{
Roughly, we say that $A$ is classical (quantum) indifferentiable from $B$ iff we can map classical (resp. quantum) attacks on $A$ to classical (resp. quantum) attacks on $B$ using simulators. Moreover, we say that the simulator is stateless if it does not store any internal state.
}, they rely on the following conjecture.

\begin{conj}[Conjecture 2 from \cite{carstens2018quantum}]\label{formal-conjecture-notmod}
Consider $N$ binary measurements described by projectors $P_1,\dots,P_N$, and a quantum state $\ket{\Psi}$.

Assume that any $t$ out of the $N$ measurements permute on state $\ket{\Psi}$. That is, for any $I$ with $|I|=t$, if $P'_1,\dots,P'_t$ and $P^{\prime \prime}_1,\dots,P^{\prime \prime}_t$ are the projectors $\{P_i\}_{i\in I}$ (possibly in different order), then $P^{\prime}_t,\dots,P'_1\ket{\Psi}=P^{\prime \prime }_t,\dots,P^{\prime\prime }_1\ket{\Psi}$.

Then there exist random variables $X_1,\dots,X_N$ with a joint distribution $D$ such that for any $I=\{i_1,\dots,i_t\}$ the joint distribution of $X_{i_1},\dots,X_{i_t}$ is the distribution of the outcomes when we measure $\ket{\Psi}$ with measurements $P_{i_1},\dots,P_{i_t}$.
\end{conj}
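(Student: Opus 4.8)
The plan is to \emph{disprove} this conjecture rather than prove it, and the guiding principle is Result~2 (\Cref{thm:joint-perms}): a quantum joint distribution on a state exists if and only if the square-root operators are fully permutable on that state (and are on-state projectors). Binary projective measurements $P_i$ are their own square roots, so ``permutable on $\ket{\Psi}$'' means the whole family $P_1,\dots,P_N$ can be reordered without changing its action on $\ket{\Psi}$. A global distribution $D$ of the kind the conjecture demands would, through this characterization, be tied to \emph{full} permutability, whereas the hypothesis only supplies permutability of each $t$-subset (in particular, for $t=2$, mere pairwise on-state commutation). So the entire game is to pry the hypothesis apart from the conclusion.

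Concretely, I would first isolate the logical gap. For $t=2$ the hypothesis is exactly pairwise on-state commutation, $[P_i,P_j]\ket{\Psi}=0$ for all $i,j$, while the conclusion asks for a single classical distribution $D$ on $(X_1,\dots,X_N)$ whose pairwise marginals reproduce the order-independent statistics $\norm{P_i^{a}P_j^{b}\ket{\Psi}}^2$. I would contrast this with the full-Hilbert-space setting, where pairwise commutation \emph{on every state} forces simultaneous diagonalizability and hence a global distribution exists automatically; the crucial point is that on-state commutation is strictly weaker and need not propagate to triples and beyond.

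The heart of the argument is an explicit counterexample. I would search for the smallest instance---a handful of projectors on a low-dimensional space with $N$ small (say $N=3$ or $4$) and a state $\ket{\Psi}$---satisfying two conditions at once: (i) every pair $P_i,P_j$ commutes on $\ket{\Psi}$, so the $t=2$ hypothesis holds; and (ii) the family is \emph{not} fully permutable on $\ket{\Psi}$, i.e.\ some reordering gives $P_{\sigma(N)}\cdots P_{\sigma(1)}\ket{\Psi}\ne P_{N}\cdots P_{1}\ket{\Psi}$ for $\sigma\in\Sigma_N$. Since the permutator and the commutation conditions are polynomial in the entries of the projectors and of $\ket{\Psi}$, this is a feasibility problem amenable to a numerical attack: parametrize the projectors, impose the pairwise-commutation equalities as constraints, and maximize a permutability-violation objective; any strictly positive optimum is a witness. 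I would then certify the witness and, by \Cref{thm:joint-perms}, conclude that no quantum joint distribution exists for it, contradicting the conjecture.

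The main obstacle I anticipate is bridging ``not fully permutable'' (which \Cref{thm:joint-perms} rules incompatible with a \emph{full} joint distribution) and ``no distribution matching only the $t$-marginals'' (the literal conclusion). For $t=2$ one must actually certify that the pairwise quantum marginals admit no global classical extension at all---a classical marginal/frustration obstruction, not merely a failure of permutability. I would therefore engineer the counterexample so that the obstruction already shows up in the low-order statistics, for instance by arranging the pairwise correlations so that a global joint would be over-determined or internally inconsistent (as in a frustrated triangle of $\pm 1$ correlations), or by choosing $t$ large enough relative to $N$ that the required marginals coincide with the full distribution and non-permutability bites directly. Confirming that the numerical witness genuinely blocks \emph{every} candidate $D$, rather than just a canonically constructed one, is the delicate step.
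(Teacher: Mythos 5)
Your proposal matches the paper's approach: the paper disproves the conjecture via exactly the numerical counterexample you describe (\Cref{lem:counterexample}: four projectors on $\mathbb{C}^8$ and a state $\ket{\phi}$, found by constrained random search maximizing $\|(P_1P_2P_3P_4-P_3P_4P_1P_2)\ket{\phi}\|$ subject to pairwise on-state commutation and projector constraints), combined with \Cref{thm:joint-perms} to conclude that no joint distribution can exist. The ``delicate step'' you flag---bridging failure of full permutability to the non-existence of \emph{any} classical $D$ matching the $t$-marginals---is the same gap the paper addresses only with a brief remark that its counterexample concerns vector equalities, which are stronger than the trace conditions in \Cref{thm:joint-perms}; so your caution is well placed but does not amount to a different route.
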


Conjecture~\ref{formal-conjecture-notmod} states that if any $t$ measurement operators permute on state $\ket{\Psi}$ then there is a joint distribution. From Theorem~\ref{thm:joint-perms}, we know that if there is a joint distribution then the operators fully permute.  Hence, the key point of the conjecture is that if any $t$ operators permute on a state, then they fully permute on it.
However, we show here that \Cref{formal-conjecture-notmod} is not true, in general. 
\begin{thm}\label{lem:counterexample}
    There is a set of four projectors $\{P_1,P_2,P_3,P_4\}$ and a state $\ket{\phi}\in\mathbb{C}^8$ such that the projectors are 2-permutable (they pairwise commute) on state $\ket{\phi}$ and they are \emph{not} 4-permutable on $\ket{\phi}$.
\end{thm}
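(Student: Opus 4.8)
The plan is to exhibit the counterexample by a constrained search followed by an exact verification, after first pinning down the two conditions at the level of vectors. Reading \Cref{formal-conjecture-notmod} in its natural (state) form, I would take ``$2$-permutable on $\ket{\phi}$'' to mean $P_iP_j\ket{\phi}=P_jP_i\ket{\phi}$ for every pair $i<j$, i.e.\ $\norm{(P_iP_j-P_jP_i)\ket{\phi}}=0$, and ``not $4$-permutable'' to mean that two orderings of the full product act differently on $\ket{\phi}$, for instance $P_4P_3P_2P_1\ket{\phi}\neq P_1P_2P_3P_4\ket{\phi}$; to also violate the trace form of \Cref{def:onstate-perms} I would additionally arrange $\norm{P_4P_3P_2P_1\ket{\phi}}\neq\norm{P_1P_2P_3P_4\ket{\phi}}$. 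The whole statement then becomes a feasibility problem: find a unit $\ket{\phi}\in\mathbb{C}^8$ and four orthogonal projectors satisfying the six equalities above exactly while a seventh quantity, the non-permutability defect, stays bounded away from zero.

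Before searching, I would rule out the trivial regimes, since this explains the need for dimension eight and for higher-rank projectors. If the $P_i$ are rank one, $P_i=\kb{v_i}$, then $P_iP_j\ket{\phi}=\braket{v_i}{v_j}\braket{v_j}{\phi}\ket{v_i}$, so on-state commutation forces either $\ket{v_i}\parallel\ket{v_j}$ or else the coefficients vanish; in the latter (orthogonal) case every length-$\geq 2$ product collapses to $0$, making all orderings agree trivially. Hence rank-one projectors cannot separate pairwise commutation from full permutability, and the same collapse tends to occur whenever the ambient dimension is too small to host four genuinely overlapping subspaces. This motivates fixing the ranks to be moderate (say rank two or three) inside $\mathbb{C}^8$, the dimension in which I expect the commutation variety to have enough room to contain configurations that are not fully permutable.

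The core step is to solve the six commutation equations \emph{exactly} rather than merely numerically, because the conjecture's hypothesis must hold on the nose. I would therefore parametrize the four ranges and $\ket{\phi}$ by real parameters, impose $P_iP_j\ket{\phi}=P_jP_i\ket{\phi}$ as polynomial constraints, and use these to cut out a (positive-dimensional) variety of admissible configurations, so that the equalities hold by the parametrization itself. Numerically, I would locate a point of this variety by minimizing $\sum_{i<j}\norm{(P_iP_j-P_jP_i)\ket{\phi}}^2$ over a product of Grassmannians (times the sphere for $\ket{\phi}$) while constraining $\norm{(P_4P_3P_2P_1-P_1P_2P_3P_4)\ket{\phi}}$ to remain large; then I would round the numerical optimum to nearby low-height algebraic (ideally rational) data and check with a computer-algebra system that the six equalities hold exactly and that the defect is exactly nonzero. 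Since the defect is a polynomial that is not identically zero on the commutation variety, a generic admissible point works, so such a clean rational point should exist.

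The main obstacle I expect is precisely reconciling exactness with non-triviality: it is easy to find approximate solutions and easy to find exact solutions that happen to be fully permutable, but producing data satisfying all six equalities exactly while keeping the fourth-order defect exactly nonzero is delicate, and it hinges on the commutation variety in $\mathbb{C}^8$ being strictly larger than its fully-permutable locus. Establishing this---either by solving the polynomial system with Gr\"obner bases, or by designing a block/symmetry structure that makes the six equalities automatic and then breaking the remaining four-fold symmetry---is the crux. Once one explicit admissible tuple $(\ket{\phi},P_1,P_2,P_3,P_4)$ is in hand, the verification of \Cref{lem:counterexample} reduces to a finite exact computation, and combined with \Cref{thm:joint-perms} it refutes \Cref{formal-conjecture-notmod}.
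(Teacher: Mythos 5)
Your proposal follows essentially the same route as the paper: a numerical constrained search over four projectors and a unit vector in $\mathbb{C}^8$, maximizing the four-fold defect $\norm{(P_1P_2P_3P_4-P_3P_4P_1P_2)\ket{\phi}}$ subject to the pairwise on-state commutation constraints $[P_i,P_j]\ket{\phi}=0$, with the explicit witness data exhibited in an appendix. If anything you are more demanding than the paper, which certifies its constraints only to precision $10^{-7}$ rather than rounding to exact algebraic data as you propose; your rank-one exclusion argument and your attention to the vector-versus-trace form of permutability are sensible refinements but not needed to match the published argument.
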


\begin{proof}
To prove the statement we found an example of such operators and a state numerically by random constrained search.
We consider 4 projectors $P_i$ and a state $\ket{\phi}$ of dimension 8. The constraints we impose are 
\begin{align}
    \forall i,j\neq i \; [P_i,P_j]\ket{\phi}=0,
\end{align}
moreover operators $P_i$ are projectors: $\forall i P_i^2=P_i$ and $\ket{\phi}$ is a unit-norm complex vector.

We look for an example to the statement that 2-permutability (commutativity) does not imply 4-permutability, so that
\begin{align}
    (P_1P_2P_3P_4-P_3P_4P_1P_2)\ket{\phi}\neq 0.
\end{align}
To find such example we used software for symbolic computing to define the problem and maximize $\|(P_1P_2P_3P_4-P_3P_4P_1P_2)\ket{\phi}\|$, where we maximize over operators $P_i$ and the state $\ket{\phi}$.

The result of our optimization can be found in Appendix~\ref{sec:appendix-num}. Note that however we consider vector equalities---instead of just traces like in Theorem~\ref{thm:joint-perms}---our example provides a stronger argument for the necessity of full permutability.
\end{proof}

One can notice by looking at the optimization problem that it is not a semidefinite problem, nor that it has any other structure that is easy to exploit. For that reason finding larger instances is computationally very expensive.

We notice that Theorem~\ref{lem:counterexample} actually disproves a slightly stronger version of Conjecture~\ref{formal-conjecture-notmod}.
In the use of Conjecture~\ref{formal-conjecture-notmod} in \cite{carstens2018quantum}, they implicitly assume that we can replace $P_i$ by $\mathbbm{1}-P_i$ and the permutation still holds. While this modification gives a slightly stronger assumption, our counterexample in Theorem~\ref{lem:counterexample} works just as well. 

For the joint distribution in Conjecture~\ref{formal-conjecture-notmod} to exist, we know from Theorem~\ref{thm:joint-perms} that all operators must on-state permute. An important observation is that Conjecture~\ref{formal-conjecture-notmod} regards vector equalities and Theorem~\ref{thm:joint-perms} regards measurement outcomes. The theorem is ``easier'' than the former and our counterexample works with vector equalities, hence we indeed disprove Conjecture~\ref{formal-conjecture-notmod}.

In \cite{ebrahimi2018post} the author uses a different conjecture and different reasoning to prove the existence of the joint distribution. Our counterexample does not disprove this other approach and we refer interested readers to \cite{ebrahimi2018post}.

\section{Almost On-State Commutation}\label{sec:almost-commuting}
In the last part of the paper we discuss almost commutativity in the on-state case. In particular, we show here that if we have two projectors that almost commute on a state then we can define a projector that fully commutes with one of the original operators and is on-state close  to the second one. 

The main tool that we need to prove this result is the Jordan's lemma.
\begin{lemma}[Jordan's lemma \cite{jordan1875essai}]\label{lem:jordan}
	Let $ P_1 $ and $ P_2 $ be two projectors with rank $ r_i:=\textnormal{rank}(P_i)$ for $ i\in\{1,2\} $. Then both projectors can be decomposed simultaneously in the form $ P_i=\bigoplus_{k=1}^{r_i}P_i^k $, where $ P_i^k $ denote rank-1 projectors acting on one- or two-dimensional subspaces. We denote the one- and two-dimensional subspaces by $ S_1,  \dots,S_l$ and subspaces by $ T_1,  \dots,T_{l'}$, respectively.
	The eigenvectors $ \ket{v_{k,1}} $ and $ \ket{v_{k,2}} $ of $ P_1^k $ and $ P_2^k $ respectively are related by:
	\begin{align}
		\ket{v_{k,2}}=\cos\theta_k\ket{v_{k,1}}+\sin\theta_k\ket{v^{\perp}_{k,1}},
		\ket{v_{k,1}}=\cos\theta_k\ket{v_{k,2}}-\sin\theta_k\ket{v^{\perp}_{k,2}}.
	\end{align}
\end{lemma}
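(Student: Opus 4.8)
The plan is to reduce the two-projector problem to the spectral analysis of a single Hermitian operator that commutes with both $P_1$ and $P_2$, so that its eigenspaces are simultaneously invariant under the two projectors. The natural choice is $A:=(P_1-P_2)^2$. First I would record its elementary properties: $A$ is Hermitian and positive semidefinite, its spectrum lies in $[0,1]$ (since $\norm{P_1-P_2}\le 1$), and---the one computation that matters here---expanding $A=P_1+P_2-P_1P_2-P_2P_1$ and using $P_i^2=P_i$ gives $P_1A=P_1-P_1P_2P_1=AP_1$ and likewise for $P_2$, so that $[A,P_1]=[A,P_2]=0$.

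With this in hand, I would decompose the (finite-dimensional) space as $\mathcal{H}=\bigoplus_\mu E_\mu$ into eigenspaces of $A$. Because $P_1$ and $P_2$ commute with $A$, every $E_\mu$ is invariant under both, and it suffices to understand the pair $(P_1,P_2)$ restricted to each $E_\mu$ separately. The eigenvalue $\mu=0$ forces $(P_1-P_2)\ket{x}=0$, i.e.\ $P_1=P_2$ on $E_0$; diagonalizing this single projector yields one-dimensional common blocks. The eigenvalue $\mu=1$ makes $B:=P_1-P_2$ restrict to a self-adjoint involution ($B^2=\mathbbm{1}$ on $E_1$) with eigenvalues $\pm1$, and the sign pins down $(P_1,P_2)=(\mathbbm{1},0)$ or $(0,\mathbbm{1})$ on each eigenline---again one-dimensional blocks.

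The substance of the argument is the generic eigenvalue $\mu\in(0,1)$, which I would write as $\mu=\sin^2\theta$. Restricting the identity $P_1A=P_1-P_1P_2P_1$ to a vector $\ket{v}\in\mathrm{range}(P_1)\cap E_\mu$ gives the key relation $P_1P_2P_1\ket{v}=(1-\mu)\ket{v}=\cos^2\theta\,\ket{v}$. For each such $\ket{v}$ I would form $\ket{w}:=P_2\ket{v}$ and check that $\mathrm{span}\{\ket{v},\ket{w}\}$ is invariant under both projectors (using $P_1\ket{w}=P_1P_2P_1\ket{v}=\cos^2\theta\,\ket{v}$), so that $P_1,P_2$ act on it as rank-one projectors onto two lines meeting at angle $\theta$. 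This is exactly the two-dimensional block in the statement, with $\ket{v_{k,1}}=\ket{v}$ and $\ket{v_{k,2}}=\ket{w}/\norm{\ket{w}}$.

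\textbf{The main obstacle} is the bookkeeping that shows these two-dimensional blocks are mutually orthogonal and exhaust $E_\mu$. Orthogonality follows cleanly from the key relation: for an orthonormal basis $\{\ket{v_k}\}$ of $\mathrm{range}(P_1)\cap E_\mu$ one has $\langle v_k|P_2|v_{k'}\rangle=\langle v_k|P_1P_2P_1|v_{k'}\rangle=(1-\mu)\langle v_k|v_{k'}\rangle=0$ for $k\ne k'$, which yields both $\ket{v_k}\perp\ket{w_{k'}}$ and $\ket{w_k}\perp\ket{w_{k'}}$. Completeness is the point that needs care: if $\ket{u}\in E_\mu$ were orthogonal to every block, then $P_1\ket{u}\in\mathrm{range}(P_1)\cap E_\mu$ would be orthogonal to all $\ket{v_k}$, hence $P_1\ket{u}=0$, and symmetrically $P_2\ket{u}=0$, forcing $A\ket{u}=0$ and contradicting $\mu>0$. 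Collecting the one- and two-dimensional blocks over all eigenvalues and reading off $\cos\theta_k$ from $1-\mu$ then yields the claimed simultaneous decomposition and the eigenvector relations.
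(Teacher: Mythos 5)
The paper does not actually prove this lemma---it is stated as a classical result and attributed to Jordan via the citation, so there is no in-paper proof to compare against. Your argument via $A:=(P_1-P_2)^2$ is the standard modern proof and it is correct: the commutation $[A,P_i]=0$, the treatment of the extreme eigenvalues $\mu\in\{0,1\}$, the key relation $P_1P_2P_1=(1-\mu)\mathbbm{1}$ on $\mathrm{range}(P_1)\cap E_\mu$, and the construction and mutual orthogonality of the two-dimensional blocks all check out, and the angle $\theta$ with $\cos^2\theta=1-\mu$ reproduces exactly the eigenvector relations in the statement. The one step I would ask you to spell out is the word ``symmetrically'' in the completeness argument: concluding $P_2\ket{u}=0$ from $\langle w_k|P_2|u\rangle=0$ requires knowing that the vectors $w_k=P_2 v_k$ \emph{span} $\mathrm{range}(P_2)\cap E_\mu$, which is not automatic from what precedes it. It does follow from the mirror identity $P_2P_1P_2=(1-\mu)\mathbbm{1}$ on $\mathrm{range}(P_2)\cap E_\mu$: for any $z$ in that space, $z=\tfrac{1}{1-\mu}P_2(P_1z)$ with $P_1z\in\mathrm{span}\{v_k\}$, hence $z\in\mathrm{span}\{w_k\}$ (equivalently, $P_2$ and $P_1$ restrict to mutually inverse-up-to-scalar bijections between the two intersections, so the dimensions match). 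With that one line added, the proof is complete, and it delivers slightly more than the paper's somewhat loosely worded statement, namely the clean block form actually used in the proof of Theorem~\ref{lem:commute}.
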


We can now prove our result.
\begin{thm}[Making almost commuting projectors commute]\label{lem:commute}
	Given any two projectors $P_1$ and $P_2$ and a state $\ket{\psi}$ we have that if $\norm{(P_1P_2-P_2P_1)\ket{\psi}}=\epsilon$ then there is a projector $P_2'$ that is close to the original projector on the state $\norm{(P_2'-P_2)\ket{\psi}}\leq\sqrt{2}\epsilon$ and $[P_1,P_2']=0$.
\end{thm}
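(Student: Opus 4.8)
The plan is to invoke Jordan's lemma (\Cref{lem:jordan}) to reduce everything to a block-by-block construction, and then to define $P_2'$ by ``rounding'' $P_2$ inside each two-dimensional block so that its range either coincides with or is orthogonal to the range of $P_1$.

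First I would apply \Cref{lem:jordan} to decompose the space simultaneously into one- and two-dimensional subspaces $S_1,\dots,S_l,T_1,\dots,T_{l'}$ invariant under both $P_1$ and $P_2$, and correspondingly write $\ket{\psi}=\sum_j\ket{\psi_j}$ as an orthogonal sum of its components in these subspaces. On each one-dimensional subspace $P_1$ and $P_2$ act as scalars and already commute, so there I set $P_2'=P_2$. On each two-dimensional block $T_k$, in the orthonormal basis $\{\ket{v_{k,1}},\ket{v_{k,1}^\perp}\}$ we have $P_1=\proj{v_{k,1}}$ and $P_2=\proj{v_{k,2}}$ with $\ket{v_{k,2}}=\cos\theta_k\ket{v_{k,1}}+\sin\theta_k\ket{v_{k,1}^\perp}$. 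The only freedom is to replace $P_2$ on this block by one of the two rank-one projectors that genuinely commute with $P_1$: either $\proj{v_{k,1}}$ (call it choice A) or $\proj{v_{k,1}^\perp}$ (choice B).

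The next step is a direct single-block computation. Writing $\ket{\psi_k}=a_k\ket{v_{k,1}}+b_k\ket{v_{k,1}^\perp}$, the commutator $[P_1,P_2]$ is off-diagonal with entries $\pm\cos\theta_k\sin\theta_k$, so $\norm{[P_1,P_2]\ket{\psi_k}}^2=\cos^2\theta_k\sin^2\theta_k\norm{\ket{\psi_k}}^2$. A short calculation in which the cross terms cancel (even for complex $a_k,b_k$) shows that choice A gives $\norm{(P_2'-P_2)\ket{\psi_k}}^2=\sin^2\theta_k\norm{\ket{\psi_k}}^2$ and choice B gives $\cos^2\theta_k\norm{\ket{\psi_k}}^2$. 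I would therefore pick, independently on each block, whichever is smaller, i.e.\ choice A when $\theta_k\le\pi/4$ and choice B otherwise, so the block contributes $\min(\sin^2\theta_k,\cos^2\theta_k)\norm{\ket{\psi_k}}^2$.

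Summing over blocks and using orthogonality gives $\norm{(P_2'-P_2)\ket{\psi}}^2=\sum_k\min(\sin^2\theta_k,\cos^2\theta_k)\norm{\ket{\psi_k}}^2$ while $\epsilon^2=\sum_k\cos^2\theta_k\sin^2\theta_k\norm{\ket{\psi_k}}^2$. The crux is the elementary inequality $\min(\sin^2\theta_k,\cos^2\theta_k)\le 2\cos^2\theta_k\sin^2\theta_k$, valid because the larger of $\cos^2\theta_k,\sin^2\theta_k$ is at least $1/2$; applying it term by term yields $\norm{(P_2'-P_2)\ket{\psi}}^2\le 2\epsilon^2$, which is the claimed bound $\sqrt 2\,\epsilon$. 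Finally, $P_2'$ is a genuine projector as a direct sum of rank-one projectors, and it commutes with $P_1$ because on every invariant subspace its range is either the range of $P_1$ or orthogonal to it. I expect the main obstacle to be the per-block difference computation---in particular verifying that the contributions $\sin^2\theta_k$ and $\cos^2\theta_k$ are independent of the direction of $\ket{\psi_k}$ within the block and survive for complex coefficients---since the reduction via Jordan's lemma and the final inequality are comparatively routine.
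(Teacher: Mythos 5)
Your proposal is correct and follows essentially the same route as the paper's proof: Jordan's lemma, the same definition of $P_2'$ (rounding $\ket{v_{k,2}}$ to $\ket{v_{k,1}}$ or $\ket{v_{k,1}^\perp}$ with threshold $\theta_k=\pi/4$), and the same final bound via $\min(\sin^2\theta_k,\cos^2\theta_k)\leq 2\sin^2\theta_k\cos^2\theta_k$. Your per-block computation of the exact contributions $\sin^2\theta_k\norm{\ket{\psi_k}}^2$ and $\cos^2\theta_k\norm{\ket{\psi_k}}^2$ (with the cross terms cancelling even for complex coefficients) is a slightly cleaner packaging of the same expansion the paper carries out, so no substantive difference remains.
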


\begin{proof}
    By Jordan's lemma (\Cref{lem:jordan}), there exist bits $\lambda_{i,1},\lambda_{i,2} \in \{0,1\}$ and vectors $\ket{u_1},...,\ket{u_m}$ and $\ket{v_{1,1}},\ket{v_{1,2}},...,\ket{v_{\ell,1}},\ket{v_{\ell,2}}$, such that
	\begin{enumerate}
	    \item $P_1 = \sum_{i \in [m]} \lambda_{i,1}\kb{u_i} + \sum_{i \in [\ell]} \kb{v_{i,1}}$ and $P_2 = \sum_{i \in [m]} \lambda_{i,2}\kb{u_i} + \sum_{i \in [\ell]} \kb{v_{i,2}}$;
	    \item $\braket{u_i}{u_k} = 0$ and $\braket{u_i}{v_{j,b}} = 0$ for all $b,i,j$ and $k \ne i$; 
	    \item $\braket{v_{j,b'}}{v_{i,b}} = 0$ for $i \ne j$ and any $b,b'$;
	    \item $0 < \braket{v_{i,1}}{v_{i,2}} < 1$.
	\end{enumerate}
	
		Let $\theta_i$ be the angle between $\ket{v_{i,1}}$ and $\ket{v_{i,2}}$ (i.e.\ $\cos \theta_i = \braket{v_{i,1}}{v_{i,2}}$), and  $\ket{v_{i,1}^\perp}$ be the state orthogonal to $\ket{v_{i,1}}$ in the subspace spanned by these two vectors. Since the non-commuting part of $P_1$ and $P_2$ must come from the pairs
	$\ket{v_{i,1}},\ket{v_{i,2}}$, we
    will define 
	$P_2'$ by removing the non-commuting part of $P_2$, shifting the vector $\ket{v_{i,2}}$, to either $\ket{v_{i,1}}$ or $\ket{v_{i,1}^\perp}$:
	\[
	P_2' =  \sum_{i \in [m]} \lambda_{i,2}\kb{u_i} +
	\sum_{i \in [\ell] : \theta_i \leq \frac{\pi}{4}}
	\kb{v_{i,1}}
	+ \sum_{i \in [\ell] : \theta_i > \frac{\pi}{4}}
	\kb{v_{i,1}^\perp}.
	\]
	
	We have clearly that $[P_1, P_2'] = 0$ since the two projectors are simultaneously diagonalizable and we now want to prove that
	\begin{align*}
	\norm{(P_2' - P_2)\ket{\psi}} \leq \sqrt{2}\eps.
	\end{align*}

	Notice that
	\begin{align}
	&\norm{(P_2' - P_2)\ket{\psi}}^2 \nonumber\\
	&= 
	\norm{\sum_{i \in [l'] : \theta_i \leq \frac{\pi}{4}}
	\left(\ket{v_{i,1}}\braket{v_{i,1}}{\psi} - \ket{v_{i,2}}\braket{v_{i,2}}{\psi} \right)
	+ \sum_{i \in [l'] : \theta_i > \frac{\pi}{4}}
	\left(\ket{v_{i,1}^\perp}\braket{v_{i,1}^\perp}{\psi} - \ket{v_{i,2}}\braket{v_{i,2}}{\psi} \right)}^2 \\
	&=	\sum_{i \in [l'] : \theta_i \leq \frac{\pi}{4}}
	\norm{\ket{v_{i,1}}\braket{v_{i,1}}{\psi} - \ket{v_{i,2}}\braket{v_{i,2}}{\psi}}^2
	+ \sum_{i \in [l'] : \theta_i > \frac{\pi}{4}}
	\norm{\ket{v_{i,1}^\perp}\braket{v_{i,1}^\perp}{\psi} - \ket{v_{i,2}}\braket{v_{i,2}}{\psi}}^2,
	\end{align}
	where in the last step we used that $\braket{v_{i,b'}}{v_{j,b}} = 0$ for $i \ne j$.
	
	Using that 
$\ket{v_{i,2}} = \cos{\theta_i}\ket{v_{i,1}} + \sin{\theta_i}\ket{v_{i,1}^\perp}$, we have that if $\theta_i \leq \frac{\pi}{4}$, then
	\begin{align}
	&\norm{\braket{v_{i,1}}{\psi}\ket{v_{i,1}} - \braket{v_{i,2}}{\psi}\ket{v_{i,2}}}^2 \nonumber \\ 
	&= \sin^4 \theta_i |\braket{v_{i,1}}{\psi}|^2 
	-  2\sin^3 \theta_i \cos \theta_i  \mathfrak{Re}(\braket{v_{i,1}^\perp}{\psi}\braket{v_{i,1}}{\psi}  )
	+  \sin^2 \theta_i \cos^2 \theta_i  |\braket{v_{i,1}^\perp}{\psi}|^2   \nonumber \\
	&+ \sin^4\theta_i  |\braket{v_{i,1}^\perp}{\psi}|^2 
	+ 2\sin^3 \theta_i \cos \theta_i  \mathfrak{Re}(\braket{v_{i,1}}{\psi}\braket{v_{i,1}^\perp}{\psi} )
	+  \sin^2 \theta_i \cos^2 \theta_i  |\braket{v_{i,1}}{\psi}|^2 \nonumber \\
	& \leq 2\sin^2\theta_i \cos^2 \theta_i (|\braket{v_{i,1}^\perp}{\psi}|^2 +  |\braket{v_{i,1}}{\psi}|^2), \label{eq:final1}
	\end{align}
	where in the inequality we used our assumption that $\theta_i \leq \frac{\pi}{4}$ which implies that $\sin \theta_i \leq \cos \theta_i$.  
	
	Using similar calculations, we have that if  $\theta_i \geq \frac{\pi}{4}$
		\begin{align}
	\norm{\braket{v_{i,1}^\perp}{\psi}\ket{v_{i,1}^\perp} - \braket{v_{i,2}}{\psi}\ket{v_{i,2}}}^2  \leq 2\sin^2\theta_i \cos^2 \theta_i (|\braket{v_{i,1}^\perp}{\psi}|^2 +  |\braket{v_{i,1}}{\psi}|^2). \label{eq:final2}
	\end{align}
	
We will show now that 
\[
 \sum_{i} \sin^2 \theta_i\cos^2 \theta_i (|\braket{v_{i,1}^\perp}{\psi}|^2 +  |\braket{v_{i,1}}{\psi}|^2)  = \eps^2,\] 
 which finishes the proof:
	\begin{align*}
	\eps^2 & =  \norm{(P_2P_1-P_1P_2)\ket{\psi}}^2\\
	&= \norm{
		\sum_{i \in [l']} \ket{v_{i,1}}\braket{v_{i,1}}{v_{i,2}}\braket{v_{i,2}}{\psi}  - \ket{v_{i,2}}\braket{v_{i,2}}{v_{i,1}}\braket{v_{i,1}}{\psi}.
	}^2 \\
	&=\norm{\sum_{i \in [l']} \cos{\theta_i} \left(\left(
	\cos{\theta_i}\braket{v_{i,1}}{\psi}  + \sin{\theta_i}\braket{v_{i,1}^\perp}{\psi} 
	\right)\ket{v_{i,1}}  - \braket{v_{i,1}}{\psi} \left( \cos{\theta_i}\ket{v_{i,1}} + \sin{\theta_i}\ket{v_{i,1}^\perp}\right)  \right)}^2\\
	&= \norm{\sum_{i \in [l']} \sin{\theta_i}\cos{\theta_i} \left(
	\braket{v_{i,1}^\perp}{\psi}\ket{v_{i,1}}  - \braket{v_{i,1}}{\psi} \ket{v_{i,1}^\perp}  \right)}^2 \\
	&= \sum_{i \in [l']} \sin^2 \theta_i \cos^ 2\theta_i \left(
	|\braket{v_{i,1}^\perp}{\psi}|^2 + |\braket{v_{i,1}}{\psi}|^2  \right).
	\end{align*}
	where in the second equality we again use that 
$\ket{v_{i,2}} = \cos{\theta_i}\ket{v_{i,1}} + \sin{\theta_i}\ket{v_{i,1}^\perp}$ and in the fourth equality we use the fact that $\braket{v_{i,b'}}{v_{j,b}} = 0$ for $i \ne j$.
\end{proof}

Our proof relies solely on Jordan's Lemma Note that Jordan's Lemma is sufficient only if we analyze commutation of projectors. Results that show how to make any Hermitian matrices commute \cite{friis1996almost,hastings2009making} are much more complicated to prove and it is not clear how to translate them to the ``on-state'' case.

We stress that our proof only works for two projectors, since Jordan's lemma does not generalize for three or more projectors. Therefore, we leave as open problem (dis)proving a generalized version of \Cref{lem:commute} for more projectors. 

In \cite{ebrahimi2018post} the author proves Theorem~\ref{lem:commute} for $ \eps=0 $, but with a different proof. They use Halmo's two rojections theorem instead of Jordan's lemma.

\bibliographystyle{alpha}
\bibliography{joints}

\newpage

\begin{appendix}
\section{Numerical values\label{sec:appendix-num}}
Below we present the state and the projectors that are claimed in the proof of Theorem~\ref{lem:counterexample}. The script used to generate these values can be found in \cite{joints-counterexample}. Before we write out the state and the projections that we found, let us state our violation of the permutator:
\begin{align}
	\|(P_1P_2P_3P_4-P_3P_4P_1P_2)\ket{\phi}\|=0.25 \pm 3\cdot 10^{-8}.
\end{align}
All the constraints listed in the proof of Theorem~\ref{lem:counterexample} are fulfilled up to the seventh decimal digit of precision, so up to $ 10^{-7} $. Internal computations of the algorithm are performed with machine precision of $ 10^{-15} $.

The state is
\begin{align}
\ket{\phi}:=\left( \begin{array}{c}
-0.135381-0.0503468 \text{i}\\0.325588\, -0.222403 \text{i}\\-0.209447-0.0404665 \text{i}\\-0.418336+0.130098 \text{i}\\-0.503693-0.299414 \text{i}\\0.379842\, +0.205081 \text{i}\\-0.179291-0.0381456 \text{i}\\0.0840381\, -0.125995 \text{i}
\end{array}\right).
\end{align}

We define the projectors by their eigenvectors:
\begin{align}
&P_1=\proj{\pi^1}, &P_2=\proj{\pi^2_1}+\proj{\pi^2_2}, \\
&P_3=\proj{\pi^3_1}+\proj{\pi^3_2}+\proj{\pi^3_3}, &P_4=\proj{\pi^4_1}+\proj{\pi^4_2}.
\end{align}
The eigenvector of $P_1$ is:
\begin{align}
\ket{\pi^1}:= \left( \begin{array}{c}
0.440777\, +0.168408  \text{i}\\
0.208781\, -0.37351 
 \text{i}\\
0.247514\, +0.0276065  \text{i}\\
-0.297971+0.0252308  \text{i}\\
0.118798\, +0.112225  \text{i}\\
-0.293428+0.270889  \text{i}\\
-0.193073+0.218869  \text{i}\\
-0.41405
\end{array}\right).
\end{align}
The eigenvectors of $P_2$ are:
\begin{align}
 \ket{\pi^2_1}:=\left( \begin{array}{c}
-0.497016-0.094035  \text{i}\\0.417527\, -0.0737062  \text{i}\\-0.000125303+0.35123  \text{i}\\0.166569\, -0.187245  \text{i}\\-0.373202+0.205633  \text{i}\\0.318452\, -0.251475  \text{i}\\-0.107473-0.123987 i\\-0.0711523
\end{array}\right),
\ket{\pi^2_2}:=\left( \begin{array}{c}
0.365906\, +0.0620997 \text{i}\\0.418728\, -0.2059 \text{i}\\0.229457\, +0.0557421 \text{i}\\-0.140393+0.0945029 \text{i}\\-0.199205-0.188139 \text{i}\\0.103617\, +0.279644 \text{i}\\-0.546498+0.147197 \text{i}\\0.275295
\end{array}\right).
\end{align}
The eigenvectors of $P_3$ are:
\begin{align}
& \ket{\pi^3_1}:=\left( \begin{array}{c}
-0.453059+0.181543 \text{i}\\-0.452841+0.0154095 \text{i}\\-0.17948-0.222827 \text{i}\\-0.230355-0.0526756 \text{i}\\-0.0918752-0.250754 \text{i}\\0.242416\, -0.126917 \text{i}\\0.300832\, -0.287566 \text{i}\\0.315259
\end{array}\right),
\ket{\pi^3_2}:=\left( \begin{array}{c}
-0.0586669-0.269559 \text{i}\\-0.280155+0.373271 \text{i}\\-0.150758-0.158539 \text{i}\\0.158793\, -0.0454731 \text{i}\\0.165888\, +0.362832 \text{i}\\-0.110453-0.310755 \text{i}\\0.353894\, -0.00811586 \text{i}\\-0.487537
\end{array}\right),\\
& \ket{\pi^3_3}:=\left( \begin{array}{c}
-0.182739-0.114718 \text{i}\\0.246775\, -0.134678 \text{i}\\-0.513357-0.193655 \text{i}\\-0.10451+0.421294 \text{i}\\0.111183\, +0.122625 \text{i}\\-0.200917-0.25897 \text{i}\\-0.0290851+0.398494 \text{i}\\0.30081
\end{array}\right).
\end{align}
The eigenvectors of $P_4$ are:
\begin{align}
\ket{\pi^4_1}:=\left( \begin{array}{c}
-0.464187+0.213035 \text{i}\\-0.364421+0.119836 \text{i}\\-0.324984-0.23097 \text{i}\\-0.256841+0.0478513 \text{i}\\-0.0700499-0.192822 \text{i}\\0.146148\, -0.225755 \text{i}\\0.243944\, -0.284786 \text{i}\\0.331272
\end{array}\right),
\ket{\pi^4_2}:=\left( \begin{array}{c}
0.111757\, +0.151275 \text{i}\\0.236223\, -0.323279 \text{i}\\0.157312\, -0.115385 \text{i}\\-0.30864+0.0990552 \text{i}\\-0.260931-0.236239 \text{i}\\0.240497\, +0.13559 \text{i}\\-0.453404+0.12357 \text{i}\\0.490125
\end{array}\right).
\end{align}

\end{appendix}
\end{document}